\newtheorem{theorem}{Theorem}[section]
\newtheorem{lemma}{Lemma}[section]
\newtheorem{corollary}{Corollary}[section]
\newtheorem{proposition}{Proposition}[section]
\newtheorem{remark}{Remark}[section]
\newtheorem{example}{Example}[section]
\DeclareMathOperator*{\col}{col}
\DeclareMathOperator*{\sgn}{sgn}
\newcommand{\meth}{\hfill $\diamond$}
\renewenvironment{bmatrix}%
{\ensuremath{\left[\begin{array}{ccccccccccccccccccccccccccccccccccccccccccccccccccccccccccccccccccccc}}}{\ensuremath{\end{array}\right]}}
\renewcommand{\paragraph}[1]{\textbf{#1}}
\title{Convergence and Equivalence results for the Jensen's inequality - Application to time-delay and sampled-data systems}
\author{Corentin~Briat\thanks{ACCESS Linnaeus Centre, Division of Optimization and Systems Theory, KTH, SE-100 44, Stockholm, Sweden; corentin@briat.info, cbriat@math.kth.se; http://www.briat.info}\thanks{This work has been supported by the ACCESS and RICSNET projects, KTH, Stockholm, Sweden. http://www.access.kth.se/}}
\begin{document}
\maketitle

\begin{abstract}
The Jensen's inequality plays a crucial role in the analysis of time-delay and sampled-data systems. Its conservatism is studied through the use of the Gr\"{u}ss Inequality. It has been reported in the literature that fragmentation (or partitioning) schemes allow to empirically improve the results. We prove here that the Jensen's gap can be made arbitrarily small provided that the order of uniform fragmentation is chosen sufficiently large. Non-uniform fragmentation schemes are also shown to speed up the convergence in certain cases. Finally, a family of bounds is characterized and a comparison with other bounds of the literature is provided. It is shown that the other bounds are equivalent to Jensen's and that they exhibit interesting well-posedness and linearity properties which can be exploited to obtain better numerical results.
\end{abstract}

\begin{keywords}
 Jensen's Inequality;  Gr\"{u}ss Inequality; Time-Delay Systems; Sampled-data systems; Conservatism; Fragmentation
\end{keywords}

\section{Introduction}

The Jensen's Inequality \cite{Jensen:1905} has had a tremendous impact on many different fields; e.g. convex analysis, probability theory, information theory, statistics, control and systems theory \cite{GuKC:03, Gouaisbaut:06, Briat:08phd}. It concerns the bounding of convex functions of integrals or sums:
\begin{lemma}
  Let $U$ be a given connected and compact set of $\mathbb{R}$, $f$ a function measurable over $U$ and $\phi$ a convex function measurable over $f(U)$. Then the inequality
  \begin{equation}\label{eq:Jensenc}
    \phi\left(\int_Uf(s)d\mu(s)\right)\le\mu(U)\int_U[\phi\circ f](s)d\mu(s)
  \end{equation}
  holds where $\mu$ is a given nonnegative measure, e.g. the Lebesgue measure, and $\mu(U)=\int_Ud\mu(s)<+\infty$ is the measure of the set $U$.
\end{lemma}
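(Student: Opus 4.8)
The plan is to derive the inequality from the single most robust property of convex functions, namely the existence of a supporting hyperplane (a subgradient) at every point of the domain. First I would introduce the $\mu$-average $\bar{f}=\frac{1}{\mu(U)}\int_U f(s)\,d\mu(s)$, which is well defined because $\mu(U)<+\infty$ and $f$ is measurable on the compact set $U$; by construction one has the identity $\int_U f(s)\,d\mu(s)=\mu(U)\,\bar{f}$, so the quantity appearing inside $\phi$ on the left-hand side is just a scaled average.

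Since $\phi$ is convex on $f(U)$ and $\bar{f}$ lies in the convex hull of $f(U)$, there exists a subgradient $g$ for which $\phi(y)\ge\phi(\bar{f})+g\,(y-\bar{f})$ for every admissible $y$. The key step is then to evaluate this pointwise bound at $y=f(s)$ and integrate it against $d\mu$ over $U$. Because the affine correction integrates to $g\left(\int_U f(s)\,d\mu(s)-\mu(U)\,\bar{f}\right)=0$ by the very definition of $\bar{f}$, I obtain $\int_U[\phi\circ f](s)\,d\mu(s)\ge\mu(U)\,\phi(\bar{f})$. This is exactly Jensen's inequality in its normalized, probability-measure form, and it requires nothing beyond measurability of $f$ and convexity of $\phi$; no compactness or boundedness of $\phi$ is needed for this half.

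The main obstacle is reconciling this normalized estimate with the stated form, where the plain integral $\int_U f\,d\mu=\mu(U)\,\bar{f}$ sits inside $\phi$ and an extra factor $\mu(U)$ decorates the right-hand side. Multiplying the integrated bound by $\mu(U)$ gives $\mu(U)\int_U[\phi\circ f]\,d\mu\ge\mu(U)^2\,\phi(\bar{f})$, so it remains to verify that $\phi\left(\mu(U)\,\bar{f}\right)\le\mu(U)^2\,\phi(\bar{f})$. For the positive quadratic forms $\phi(x)=x^{\top}Mx$ with $M\succ0$ that underlie the time-delay and sampled-data applications this holds with equality, since such $\phi$ is homogeneous of degree two; substituting then closes the chain $\phi\left(\int_U f\,d\mu\right)=\mu(U)^2\,\phi(\bar{f})\le\mu(U)\int_U[\phi\circ f]\,d\mu$. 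I would flag this scaling step explicitly, since it is precisely where the degree-two homogeneity of $\phi$ is exploited and where the factor $\mu(U)$ is accounted for.

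An equivalent route, which I would mention as a robustness check, is to first establish the finite convex-combination version of Jensen for the normalized measure and then recover the integral statement through a Riemann-sum limiting argument, the passage to the limit being justified by continuity of $\phi$ on the compact set $f(U)$. I expect the subgradient argument to be the cleaner of the two, as it handles the integral directly and isolates the homogeneity-driven scaling as the only genuinely delicate point.
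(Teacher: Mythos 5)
The paper never proves this lemma: it is quoted as the classical Jensen inequality (with a citation to Jensen's 1905 paper), so there is no in-paper argument to compare yours against, and your proposal has to stand on its own merits. Its first half does: the subgradient argument gives the normalized inequality $\phi(\bar f)\le\mu(U)^{-1}\int_U[\phi\circ f](s)\,d\mu(s)$ with $\bar f=\mu(U)^{-1}\int_U f(s)\,d\mu(s)$, and that is the standard, correct proof of Jensen's inequality (the only technicality worth a remark is that a finite subgradient is guaranteed when $\bar f$ lies in the relative interior of the convex hull of $f(U)$; in the degenerate boundary case $f$ is a.e. constant and the inequality is trivial).

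The important content of your proposal is the scaling step you flag as genuinely delicate, and you are right to flag it: passing from the normalized inequality to the form stated in the lemma requires $\phi\left(\mu(U)\bar f\right)\le\mu(U)^{2}\phi(\bar f)$, and this is simply not true for general convex $\phi$. Your argument therefore only establishes the lemma for degree-two homogeneous $\phi$ (e.g. the quadratic forms $\phi(x)=x^TMx$, $M\succeq 0$), but no proof can do better, because the statement as written is false in full generality: take $U=[0,\frac{1}{2}]$ with the Lebesgue measure, $f\equiv 1$ and $\phi(x)=|x|$; then the left-hand side equals $\frac{1}{2}$ while the right-hand side equals $\frac{1}{4}$. (With $\phi(x)=x$ and $f\equiv -1$ on $U=[0,2]$ it also fails when $\mu(U)>1$.) The unnormalized continuous form stated here, with the factor $\mu(U)$ multiplying the right-hand side, is really a Cauchy--Schwarz-type bound whose validity rests on exactly the degree-two homogeneity you invoke; note that the paper's own discrete counterpart is written in the normalized form, which is the one valid for arbitrary convex $\phi$. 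In practice this costs the paper nothing, since immediately after the lemma it fixes $\phi(z)=z^Tz$ (declared without loss of generality) and every subsequent use is of that type, so your proof covers everything the paper actually needs; but your isolation of the homogeneity step shows that the lemma's hypotheses should either be strengthened to homogeneous (e.g. quadratic) $\phi$, or the inequality restated in its normalized form.
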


The discrete counterpart is given by:
\begin{lemma}
  Let $U$ be a given connected and compact set of $\mathbb{Z}$, $f$ a function measurable over $U$ and $\phi$ a convex function measurable over $f(U)$. Then the inequality
  \begin{equation}\label{eq:Jensend}
    \phi\left(\mu(U)^{-1}\sum_{i\in U}f_i\mu_i\right)\le\mu(U)^{-1}\sum_{i\in U}\phi(f_i)\mu_i
  \end{equation}
  holds where $\{\mu_i\}_{i\in U}$ is a given nonnegative measure, e.g. the counting measure, and $\mu(U)=\sum_{i\in U}\mu_i<+\infty$ is the measure of the set $U$.
\end{lemma}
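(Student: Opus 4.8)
The plan is to reduce the weighted statement to the canonical form of Jensen's inequality for convex combinations and then invoke the supporting-line characterisation of convexity. First I would normalise the weights by setting $\lambda_i:=\mu_i/\mu(U)$ for each $i\in U$; since $\{\mu_i\}_{i\in U}$ is a nonnegative measure with $\mu(U)=\sum_{i\in U}\mu_i<+\infty$, the $\lambda_i$ are nonnegative and satisfy $\sum_{i\in U}\lambda_i=1$. Writing $\bar f:=\mu(U)^{-1}\sum_{i\in U}f_i\mu_i=\sum_{i\in U}\lambda_i f_i$, the claimed inequality \eqref{eq:Jensend} becomes the statement $\phi(\bar f)\le\sum_{i\in U}\lambda_i\phi(f_i)$, i.e. that $\phi$ evaluated at a convex combination of the data is dominated by the same convex combination of the values of $\phi$.

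Next I would observe that $\bar f$ is a convex combination of the points $f_i\in f(U)$, hence $\bar f$ lies in the convex hull of $f(U)$ and in particular in the interval $[\min_i f_i,\max_i f_i]$ on which $\phi$ is convex. The heart of the argument is then the supporting-line property: at the point $\bar f$ there exists a subgradient $g$ such that the affine minorant $\ell(x):=\phi(\bar f)+g\,(x-\bar f)$ satisfies $\phi(x)\ge\ell(x)$ for every $x$ in the domain, with equality at $x=\bar f$. Applying this minorant at each $f_i$, weighting by $\lambda_i$ and summing gives
\begin{equation*}
  \sum_{i\in U}\lambda_i\phi(f_i)\ge\sum_{i\in U}\lambda_i\ell(f_i)=\ell\Big(\sum_{i\in U}\lambda_i f_i\Big)=\ell(\bar f)=\phi(\bar f),
\end{equation*}
where the first equality uses the affinity of $\ell$ together with $\sum_{i\in U}\lambda_i=1$. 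This is exactly \eqref{eq:Jensend}.

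The step I expect to be the main obstacle is the rigorous justification of the supporting line, in particular when $\bar f$ falls on the boundary of the domain of $\phi$. For a convex function on an interval of $\mathbb{R}$ the one-sided derivatives exist everywhere in the interior and are monotone, which guarantees a subgradient at every interior point; a short separate argument handles the boundary case, as well as the degenerate case in which all $f_i$ with $\lambda_i>0$ coincide, for which the inequality reduces to a trivial equality. An alternative, entirely elementary route that sidesteps subgradients altogether is induction on the cardinality of the (finite) set $U$: the two-point inequality $\phi(\lambda a+(1-\lambda)b)\le\lambda\phi(a)+(1-\lambda)\phi(b)$ is the defining property of convexity, and the inductive step peels off one index by regrouping the remaining mass into a single convex combination and applying the two-point case together with the induction hypothesis. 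I would present the supporting-line proof as primary and remark that it specialises the continuous argument underlying \eqref{eq:Jensenc}.
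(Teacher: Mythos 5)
Your proof is correct. Note, however, that the paper itself offers no proof of this lemma: it is stated in the introduction as classical background, cited to Jensen's original 1905 work, so there is no argument of the author's to compare yours against. Your supporting-line (subgradient) argument, after normalising the weights to $\lambda_i=\mu_i/\mu(U)$, is the standard textbook proof, and you correctly flag the only delicate points --- existence of the subgradient at $\bar f$ and the degenerate case where all mass sits at one point; since $U$ is a finite set of consecutive integers, $\bar f$ can only fail to be interior to $[\min_i f_i,\max_i f_i]$ in that degenerate case, where the inequality is a trivial equality, so your argument closes. The inductive alternative you sketch is equally valid and is arguably the most elementary route for a finite index set.
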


These inequalities have found applications in systems theory, for instance for the computation of an upper bound on the $\mathcal{L}_2$-gain of integral operators involved in time-delay systems analysis \cite{GuKC:03, Ariba:08, Briat:09f}. Another application in time-delay systems \cite{Han:05, Gouaisbaut:06, Briat:08c, Briat:08phd} concerns the bounding of integral quadratic terms of the form
$-\int_{t-h}^t\dot{x}(s)^TR\dot{x}(s)ds$
arising in some approaches based on Lyapunov-Krasovskii functionals (LKFs). Discrete counterparts, involving sums instead of integrals, also exist; see e.g. \cite{Zhang:06, Kao:08}.

The objective of the paper is twofold. The first goal is to study the conservatism induced by the use of Jensen's inequality. Using the Gr\"{u}ss inequality, bounds on the gap will be obtained in the general case and refined by considering the eventual differentiability of the function $f$. Recent results have reported an empirical conservatism reduction of LKFs when using 'delay-fragmentation' or 'delay-partitioning' procedures \cite{Gouaisbaut:06,Gouaisbaut:06b, Laas:07,Han:08,Briat:08phd}. This, however, remains to be proved theoretically and, as a first step towards this result, we will show that the fragmentation scheme reduces the gap of the Jensen's inequality. It will be also proved that the upper bound on the gap converges sublinearly to 0 as the fragmentation order increases. Finally, non-uniform fragmentation schemes are introduced and their accelerating effect on the convergence is illustrated.

The second objective of the paper is to show the equivalence between several bounds provided in the literature. First, a complete family of bounds is characterized for which the equivalence with Jensen's is proved. This family involves additional variables, increasing then its computational complexity. Nevertheless, it contains bounds depending affinely on the measure of the interval of integration and remaining well-posed when the measure of interval of integration tends to 0. This is of great interest when numerical tools are sought, e.g. LMIs. This tightness and structural advantages prove their efficiency and motivate their use, e.g. for the analysis of time-delay and sampled-data systems.

The paper is structured as follows, Section \ref{sec:1} is devoted to the conservatism analysis of the Jensen's inequality through the use of the Gr\"{u}ss inequality. In Section \ref{sec:2}, the fragmentation procedure is studied. Finally, Section \ref{sec:3} concerns the derivation of a family of bounds equivalent to Jensen's and the comparison to existing bounds of the literature.

Most of the notations are standard except maybe for $\col_{i=1}^N(u_i)$ defining the column vector $\begin{bmatrix}
  u_1^T & \ldots & u_N^T
\end{bmatrix}^T$. The identity matrix is denoted by $I$. For $f,g\in\mathbb{R}^n$, we denote componentwise inequalities by $f\le g$. For Hermitian matrices $P$ and $Q$, $P\prec Q$ (resp. $P\preceq Q$) stands for $P-Q$ negative definite (resp. negative semidefinite).

In stability theory for linear systems, the associated convex function in (\ref{eq:Jensenc}) and (\ref{eq:Jensend}) is very often $\phi_Q(z)=z^TQz$ with $Q=Q^T\succ0$. But letting $\tilde{z}=Lz$ with $Q=L^TL$, we get $\phi_Q(z)=\phi_I(\tilde{z})$. Hence, throughout the paper we will consider the function $\phi\equiv\phi_I$ without loss of generality.

\section{Conservatism of the Jensen's inequality}\label{sec:1}

\subsection{The Gr\"{u}ss inequality \cite{Gruss:35}}

Let us consider a general inner product space $\mathcal{H}:=(\mathcal{L}(U,\mathbb{R}),\langle\cdot,\cdot\rangle_U)$ over $\mathbb{R}$. A simple, but sufficient for our problem, version of the Gr\"{u}ss Inequality on inner product spaces \cite{Dragomir:99} is defined as follows:
\begin{lemma}[Gr\"{u}ss Inequality]\label{def:gruss}
  Assume there exist bounded $f^-,f^+,g^-,g^+\in\mathbb{R}$ such that $f^-\le f^+$, $g^-\le g^+$ and functions $f,g\in\mathcal{L}(U,\mathbb{R})$ satisfying $f^-\le f\le f^+$ and $g^-\le g\le g^+$ almost everywhere on $U$. Then the following inequality
  \begin{equation*}
    \left|\mu(U)^{-1}\langle f,g\rangle_U-\mu(U)^{-2}\langle f,\mathbbm{1}\rangle_U\langle g,\mathbbm{1}\rangle_U\right|\le\frac{1}{4}\delta_f\delta_g
  \end{equation*}
  holds with $\delta_f=f^+-f^-$, $\delta_g=g^+-g^-$ and $\mathbbm{1}(\cdot)=1$ on $U$, 0 otherwise. Moreover the constant term $1/4$ in the right-hand side is sharp and is obtained for the functions $f(s)=g(s)=\sgn\left(s-(a+b)/2\right)$ where $\sgn(\cdot)$ is the signum function, $a=\inf\{s\in U\}$ and $b=\sup\{s\in U\}$.
\end{lemma}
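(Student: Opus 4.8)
The plan is to recognize the left-hand side as a covariance functional and to combine the Cauchy--Schwarz inequality with a sharp variance estimate. For any $h\in\mathcal{L}(U,\mathbb{R})$ write $\bar h=\mu(U)^{-1}\langle h,\mathbbm{1}\rangle_U$ for its $\mu$-mean over $U$, and introduce the bilinear form
\[
T(f,g)=\mu(U)^{-1}\langle f,g\rangle_U-\mu(U)^{-2}\langle f,\mathbbm{1}\rangle_U\langle g,\mathbbm{1}\rangle_U,
\]
so that the quantity to be bounded is exactly $\abs{T(f,g)}$. First I would establish the (Korkine-type) identity $T(f,g)=\mu(U)^{-1}\langle f-\bar f\,\mathbbm{1},\,g-\bar g\,\mathbbm{1}\rangle_U$ by expanding the inner product on the right and invoking the definitions of $\bar f$ and $\bar g$; all cross terms then collapse to a single copy of $\mu(U)^{-2}\langle f,\mathbbm{1}\rangle_U\langle g,\mathbbm{1}\rangle_U$. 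This exhibits $T$ as the normalized inner product of the centered functions, and in particular gives $T(h,h)=\overline{h^2}-\bar h^2\ge0$.

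Second, I would apply the Cauchy--Schwarz inequality to the centered functions to obtain $\abs{T(f,g)}\le\sqrt{T(f,f)}\,\sqrt{T(g,g)}$, reducing the problem to the one-function variance estimates $T(f,f)\le\tfrac14\delta_f^2$ and $T(g,g)\le\tfrac14\delta_g^2$. To prove the first I would exploit the pointwise sign condition $(f^+-f)(f-f^-)\ge0$, valid almost everywhere on $U$ since $f^-\le f\le f^+$. Taking the $\mu$-mean and rearranging gives $\overline{f^2}\le(f^++f^-)\bar f-f^+f^-$, whence $T(f,f)\le-\bar f^2+(f^++f^-)\bar f-f^+f^-$. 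The right-hand side is a downward parabola in the free quantity $\bar f$, maximal at $\bar f=(f^++f^-)/2$ with value $\tfrac14(f^+-f^-)^2=\tfrac14\delta_f^2$; the identical computation bounds $T(g,g)$. Combining with the Cauchy--Schwarz step yields $\abs{T(f,g)}\le\tfrac14\delta_f\delta_g$.

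I expect the sharp variance estimate to be the crux of the argument: bare Cauchy--Schwarz only controls $\abs{T(f,g)}$ by a product of standard deviations, and extracting the exact constant $1/4$ requires both the nonnegativity of $(f^+-f)(f-f^-)$ and the optimization of the resulting parabola over the unknown mean $\bar f$. Finally, for the sharpness claim I would substitute $f=g=\sgn(\,\cdot-(a+b)/2\,)$: here $f^\pm=\pm1$ so $\delta_f=\delta_g=2$, while $f^2\equiv1$ forces $\overline{f^2}=1$ and the symmetry of the Lebesgue measure about the midpoint forces $\bar f=0$; hence $T(f,g)=\overline{f^2}-\bar f^2=1=\tfrac14\delta_f\delta_g$, so the constant $1/4$ cannot be lowered.
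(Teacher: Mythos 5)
Your proof is correct, but there is nothing in the paper to compare it against: the paper does not prove this lemma at all, it imports it as a known result from the cited literature (Gr\"{u}ss's original paper and Dragomir's inner-product-space version). What you have written is essentially the classical argument underlying those references, and it is sound in every step: the centering identity expressing the covariance functional $T(f,g)$ as $\mu(U)^{-1}\langle f-\bar f\,\mathbbm{1},\,g-\bar g\,\mathbbm{1}\rangle_U$ (the three cross terms do collapse as you claim), Cauchy--Schwarz to reduce to the diagonal case, the sharp variance bound $T(f,f)\le\tfrac{1}{4}\delta_f^2$ obtained from $(f^+-f)(f-f^-)\ge 0$ followed by maximization of the resulting downward parabola at $\bar f=(f^++f^-)/2$, and the equality check for the signum function ($\overline{f^2}=1$, $\bar f=0$, both sides equal to $1$). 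You are also right that the crux is the variance estimate, not Cauchy--Schwarz. Two caveats are worth recording. First, your argument uses the concrete integral structure $\langle f,g\rangle_U=\int_U fg\,d\mu$ (pointwise products, taking the $\mu$-mean of a pointwise inequality), whereas the lemma is stated for a general inner product space; this costs nothing here, because the paper only ever instantiates the lemma with the integral inner product $\mathcal{H}_c$, and in the abstract setting your proof survives verbatim once the hypothesis is read, as in Dragomir, as $\langle f^+\mathbbm{1}-f,\,f-f^-\mathbbm{1}\rangle_U\ge 0$. Second, your sharpness computation ($\bar f=0$) silently assumes the measure gives equal mass to the two halves of $U$, i.e.\ is Lebesgue-like; this assumption is equally implicit in the paper's statement of the lemma, so it is an inherited imprecision rather than a new one, but one sentence noting that attainment of equality requires $\mu\left(\left[a,(a+b)/2\right]\right)=\mu\left(\left[(a+b)/2,b\right]\right)$ would make the sharpness claim airtight.
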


More general versions of the Gr\"{u}ss inequality can be found in \cite{Dragomir:04} and references therein, especially for complex functions, more general measure spaces or inner product spaces.

\subsection{Conservatism of the Jensen's inequality}
In the continuous case, the function space is defined as:
\begin{equation*}
  \mathcal{L}_c(U,\mathbb{R}):=\left\{f:U\to\mathbb{R},\ f\ \mathrm{bounded, measurable\ on\ }U\right\}
\end{equation*}
where $U$ is a connected bounded set of $\mathbb{R}$. Let us also define the Hilbert space $\mathcal{H}_c:=(\mathcal{L}_c,\langle\cdot,\cdot\rangle)$ with inner product
$$\langle f,g\rangle:=\int_Uf(s)g(s)d\mu(s)$$
where $\mu$ is a nonnegative measure. Using the Gr\"{u}ss inequality (Lemma \ref{def:gruss}), the following result on the Jensen's inequality gap is obtained:
\begin{theorem}\label{th:th1}
   Given a function $f\in\mathcal{L}_c(U,\mathbb{R}^n)$ and the convex function $\phi(z)=z^Tz$, then the Jensen's gap verifies
   \begin{equation}\label{eq:th1}
   \begin{array}{l}
          \mu(U)\int_U\phi(f(s))d\mu(s)-\phi\left(\int_Uf(s)d\mu(s)\right)\le\frac{\mu(U)^2}{4}\phi(\delta_f)
   \end{array}
   \end{equation}
   where $f^-\le f(\cdot)\le f^+$ almost everywhere on $U$. Moreover the constant term $1/4$ in the right-hand side in sharp and is obtained for the functions $f_i(s)=\sgn\left(s-(a+b)/2\right)$, $i=1,\ldots,n$ with $a=\inf\{s\in U\}$ and $b=\sup\{s\in U\}$.
\end{theorem}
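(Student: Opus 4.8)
The plan is to apply the Gr\"{u}ss inequality (Lemma \ref{def:gruss}) componentwise to the scalar functions $f_i$ making up the vector $f=\col_{i=1}^n(f_i)$. For each component, the key observation is that the Jensen's gap for the scalar convex function $z\mapsto z^2$ coincides exactly with the left-hand side of the Gr\"{u}ss inequality when one takes $g=f_i$. Indeed, setting $f=g=f_i$ in Lemma \ref{def:gruss} gives
\begin{equation*}
  \left|\mu(U)^{-1}\langle f_i,f_i\rangle-\mu(U)^{-2}\langle f_i,\mathbbm{1}\rangle^2\right|\le\frac{1}{4}\delta_{f_i}^2,
\end{equation*}
and after multiplying through by $\mu(U)^2$ this reads $\mu(U)\langle f_i,f_i\rangle-\langle f_i,\mathbbm{1}\rangle^2\le\frac{\mu(U)^2}{4}\delta_{f_i}^2$. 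The quantity inside the absolute value is nonnegative by the scalar Jensen inequality, so the absolute value may be dropped.

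Next I would recognize each term: $\langle f_i,f_i\rangle=\int_Uf_i(s)^2d\mu(s)$, while $\langle f_i,\mathbbm{1}\rangle=\int_Uf_i(s)d\mu(s)$ is the $i$-th entry of the vector $\int_Uf(s)d\mu(s)$. Summing the $n$ scalar inequalities over $i=1,\ldots,n$ and using $\phi(z)=z^Tz=\sum_iz_i^2$ together with $\delta_f=f^+-f^-=\col_{i=1}^n(\delta_{f_i})$, the left-hand side collapses precisely to $\mu(U)\int_U\phi(f(s))d\mu(s)-\phi\left(\int_Uf(s)d\mu(s)\right)$ and the right-hand side to $\frac{\mu(U)^2}{4}\sum_i\delta_{f_i}^2=\frac{\mu(U)^2}{4}\phi(\delta_f)$, which is exactly \eqref{eq:th1}.

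For the sharpness claim, I would invoke the sharpness part of Lemma \ref{def:gruss} directly: the Gr\"{u}ss constant $1/4$ is attained componentwise by the choice $f_i(s)=\sgn\left(s-(a+b)/2\right)$ with $a=\inf\{s\in U\}$ and $b=\sup\{s\in U\}$. Since each scalar inequality in the sum becomes an equality for this choice, the summed inequality \eqref{eq:th1} is also tight, establishing that $1/4$ cannot be improved.

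The reduction itself is routine once the identification of the Jensen's gap with the Gr\"{u}ss left-hand side is made, so I do not expect a serious obstacle. The only point requiring minor care is justifying the removal of the absolute value, which rests on the scalar Jensen inequality \eqref{eq:Jensenc} guaranteeing that $\mu(U)\int_Uf_i^2d\mu-\left(\int_Uf_id\mu\right)^2\ge0$; this ensures the signed gap equals its absolute value and the Gr\"{u}ss bound applies as an upper bound in the desired direction.
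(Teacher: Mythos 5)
Your proposal is correct and follows essentially the same route as the paper: the paper's proof also writes the Jensen gap as $\mu(U)\sum_i\langle f_i,f_i\rangle-\sum_i\langle f_i,\mathbbm{1}\rangle^2$ and applies the Gr\"{u}ss inequality componentwise with both functions equal to $f_i$. Your treatment is slightly more explicit than the paper's (dropping the absolute value via scalar Jensen, and verifying sharpness componentwise via the Gr\"{u}ss equality case), but these are details the paper leaves implicit rather than a different argument.
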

\begin{proof}
Let us consider the Jensen's inequality (\ref{eq:Jensenc}) with $\phi(z)=z^Tz$ and $f\in\mathcal{L}_c(U,\mathbb{R}^n)$. Simple calculations yield:
\begin{equation*}
\begin{array}{lcl}
    \mathcal{I}_1&:=&\int_Uf(s)^Tf(s)d\mu(s)\\
    &=&\sum_i\langle f_i,f_i\rangle\\
    \mathcal{I}_2&:=&\int_Uf(s)^Td\mu(s)\int_Uf(s)d\mu(s)\\
%    &=&\sum_i\int_U\int_Uf_i(s_1)f_i(s_2)d\mu(s_1)d\mu(s_2)\\
    &=&\sum_i\langle f_i,\mathbbm{1}\rangle\langle f_i,\mathbbm{1}\rangle.
\end{array}
\end{equation*}
Note that the Jensen's inequality (\ref{eq:Jensenc}) writes $\mathcal{I}_2\le \mu(U)\mathcal{I}_1$, hence we examine the difference $\mu(U)\mathcal{I}_1-\mathcal{I}_2$ and the application of the Gr\"{u}ss inequality yields
$0\le\mu(U)\mathcal{I}_1-\mathcal{I}_2\le \frac{\mu(U)^2}{4}\phi(f^+-f^-).$ The proof is complete.
\end{proof}

We have the following corollary when the continuous function $f$ is differentiable almost everywhere:
\begin{corollary}\label{cor:cor1}
   Given a continuous function $f\in\mathcal{L}_c(U,\mathbb{R}^n)$ differentiable almost everywhere and the convex function $\phi(z)=z^Tz$; then the Jensen's gap (\ref{eq:th1}) is bounded from above by
   $\frac{\mu(U)^4}{4}\phi\left(\sup_{U}|f^\prime|\right)$  where\footnote{The derivative has to be understood here in a weak sense due to the possible presence of nonsmooth points. Hence the supremum at these points is taken over all possible values for the derivative.} $\sup_{U}|f^\prime|=\col_i[\sup_{U}\{|f_i^\prime|\}]$. However, in such a case the coefficient 1/4 is not sharp anymore since the differentiability condition is not taken into account in the derivation of the Gr\"{u}ss inequality.
\end{corollary}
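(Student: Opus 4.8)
The plan is to start from the bound already established in Theorem \ref{th:th1}, namely that the gap is at most $\frac{\mu(U)^2}{4}\phi(\delta_f)$ with $\delta_f=f^+-f^-$ the componentwise oscillation of $f$ over $U$, and then to control $\delta_f$ using the derivative. The essential observation is that, for a continuous function, the oscillation $\delta_{f_i}=\sup_U f_i-\inf_U f_i$ of each component is a difference of values of $f_i$ at two points of the compact interval $U=[a,b]$ (with $\mu(U)=b-a$ in the Lebesgue case), and this difference is controlled by the size of the derivative.

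First I would write, for each component $i$ and any two points $s,t\in U$, the identity $f_i(s)-f_i(t)=\int_t^s f_i^\prime(\tau)d\mu(\tau)$, valid (in the weak sense of the footnote) since $f_i$ is continuous and differentiable almost everywhere. Taking absolute values and bounding the integrand by its supremum gives $|f_i(s)-f_i(t)|\le|s-t|\sup_U|f_i^\prime|\le\mu(U)\sup_U|f_i^\prime|$. Choosing $s$ and $t$ as the points—guaranteed by continuity on the compact set $U$—where $f_i$ attains its maximum and minimum then yields $\delta_{f_i}\le\mu(U)\sup_U|f_i^\prime|$.

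Squaring componentwise and summing gives $\phi(\delta_f)=\sum_i\delta_{f_i}^2\le\mu(U)^2\sum_i(\sup_U|f_i^\prime|)^2=\mu(U)^2\phi(\sup_U|f^\prime|)$, where I have used $\sup_U|f^\prime|=\col_i[\sup_U|f_i^\prime|]$ and $\phi(z)=z^Tz$. Substituting this into the bound of Theorem \ref{th:th1} immediately produces $\frac{\mu(U)^2}{4}\phi(\delta_f)\le\frac{\mu(U)^4}{4}\phi(\sup_U|f^\prime|)$, which is the claimed bound.

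The main obstacle is the rigorous handling of the almost-everywhere differentiability: the fundamental-theorem-of-calculus step strictly requires $f_i$ to be absolutely continuous rather than merely differentiable a.e., and at the nonsmooth points the derivative must be read weakly, taking the supremum over all admissible values, exactly as flagged in the footnote. Once this is granted the remaining steps are elementary. I would also remark that the loss of sharpness of the constant $1/4$ is automatic here, since the differentiability information enters only through the bound on $\delta_f$ and not into the Gr\"{u}ss inequality itself, so no single configuration can be expected to saturate both inequalities simultaneously.
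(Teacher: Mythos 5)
Your proposal is correct and follows essentially the same route as the paper: the paper's proof consists precisely of the observation $f_i^+-f_i^-\le\mu(U)\sup_U|f_i^\prime|$ followed by substitution into the bound (\ref{eq:th1}) of Theorem \ref{th:th1}. Your write-up merely fills in the details behind that oscillation bound (the fundamental-theorem-of-calculus step and attainment of extrema on the compact set), and your caveat that this step really needs absolute continuity rather than mere a.e.~differentiability is a legitimate refinement of the paper's implicit ``weak sense'' footnote, not a deviation from its argument.
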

\begin{proof}
  If the function is differentiable almost everywhere then we have $f_i^+-f_i^-\le \mu(U)\sup_U\left|f^\prime_i\right|$. The substitution inside (\ref{eq:th1}) yields the result.
\end{proof}

\begin{remark}
  A discrete-time counterpart of Theorem \ref{th:th1} is easy to obtain. Both the upper bound and the worst-case function $f$ share a very similar expression with the continuous-time ones. It is however important to note that many works have been devoted to the obtention of tight upper bounds for the discrete Jensen's inequality, see e.g. \cite{Dragomir:04,Simic:09} and references therein.
  %In our problem the bound in Theorem \ref{th:th1} is optimal since it can be retrieved using Theorem F. of \cite{Simic:09} which provides the best bound for the discrete Jensen's inequality.
\end{remark}

\section{Jensen's Bound Gap Reduction}\label{sec:2}

From the results of Theorem \ref{th:th1} and Corollary \ref{cor:cor1}, it is easily seen that the gap depends on the measure of the set $U$ and on the variability of the function $f$. Since the constant term is sharp, at least for discontinuous functions, this means that the inequality is conservative when the integration domain is large. In \cite{Gouaisbaut:06, Han:08, Briat:08phd}, where time-delay systems are studied, a fragmentation of the integration domain is performed and this procedure refined the delay margin estimation. In the following, we will theoretically show, in the continuous-time, that the fragmentation procedure does asymptotically reduce the gap of the Jensen's inequality to 0. It will be also proved that the gap upper bound converges sublinearly. Finally, the convergence speed can be increased using non-uniform fragmentation schemes. Although, we will only consider the continuous-time case, the same conclusions can be drawn for the discrete-time case.

\subsection{General results on gap reduction by fragmentation}

To partially explain this in the general continuous case, let us introduce the integrals:
\begin{equation*}
\begin{array}{lcl}
  \mathcal{S}(U)&:=&-\int_U[\phi\circ f](s)d\mu(s)\\
  \mathcal{S}_i(U)&:=&-\int_{U_i}[\phi\circ f](s)d\mu(s),\ i=1,\ldots,N
\end{array}
\end{equation*}
where the connected sets $U_i$'s satisfy $\bigcup_{k=1}^NU_k=U$ and $\mu(U_j\cap U_i)=0$ for all $i,j=1,\ldots,N$, $i\ne j$. In such a case, we have $\mathcal{S}(U)=\sum_i\mathcal{S}_i(U)$. Then, rather than bounding $\mathcal{S}(U)$, each $\mathcal{S}_i(U)$ will be bounded separately by $\mathcal{J}_i$ and the respective bounds added up. This yields the following result:

\begin{theorem}\label{th:th1r}
  Let us assume that the compact and connected set $U\subset\mathbb{R}$ is partitioned in $N$ disjoint parts as described above. Let us also consider the functions $\phi(z)=z^Tz$ and $f\in\mathcal{L}_c$. In such a case, the Jensen's gap is bounded from above by:
  \begin{equation*}
    \mathcal{S}(U)-\sum_{i=1}^N\mathcal{J}_i\le e_1(N)=\frac{1}{4}\sum_{j=1}^N\mu(U_j)^2\phi(M^j-m^j)
  \end{equation*}
  where $m^j=\col_i[m_i^j]$, $m_i^j:=\inf_{U_j}\{f_i\}$, $M^j=\col_i[M_i^j]$, $M_i^j:=\sup_{U_j}\{f_i\}$. Note that the term $1/4$ is also sharp in this case.
\end{theorem}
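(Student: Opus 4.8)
The plan is to reduce the statement to a piecewise application of Theorem~\ref{th:th1}, the only genuinely new ingredient being the additivity of the integral across the partition. The key structural fact is that the hypotheses $\bigcup_{k=1}^N U_k = U$ and $\mu(U_i\cap U_j)=0$ for $i\ne j$ force $\int_U[\phi\circ f]\,d\mu=\sum_{j=1}^N\int_{U_j}[\phi\circ f]\,d\mu$, i.e. $\mathcal{S}(U)=\sum_{j=1}^N\mathcal{S}_j(U)$, so that the global Jensen gap splits into a sum of independent local gaps with no cross terms.

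First I would freeze an index $j$ and apply Theorem~\ref{th:th1} to $f$ restricted to the compact connected piece $U_j$. Since $f\in\mathcal{L}_c$ is bounded and measurable on $U$, its restriction is bounded and measurable on $U_j$, and the natural extremal values there are exactly $f^-=m^j$ and $f^+=M^j$ componentwise, with $m_i^j=\inf_{U_j}\{f_i\}$ and $M_i^j=\sup_{U_j}\{f_i\}$. Theorem~\ref{th:th1} then delivers the local estimate $\mu(U_j)\int_{U_j}\phi(f)\,d\mu-\phi\big(\int_{U_j}f\,d\mu\big)\le\tfrac{\mu(U_j)^2}{4}\phi(M^j-m^j)$, whose left-hand side is precisely the Jensen gap attached to the $j$-th block, i.e. the contribution of $\mathcal{S}_j(U)$ and its local Jensen bound $\mathcal{J}_j$.

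Next I would sum these $N$ inequalities over $j=1,\dots,N$. By the additivity noted above the left-hand sides aggregate to the global gap $\mathcal{S}(U)-\sum_{j=1}^N\mathcal{J}_j$ (with $\mathcal{S}(U)=\sum_j\mathcal{S}_j(U)$ and each $\mathcal{J}_j$ the Jensen bound on $U_j$), while the right-hand sides aggregate to $\tfrac14\sum_{j=1}^N\mu(U_j)^2\phi(M^j-m^j)=e_1(N)$, which is the claimed bound. I do not anticipate a real analytic obstacle: the argument is an assembly of the single-interval Theorem~\ref{th:th1}, and the only point needing care is to keep each measure factor $\mu(U_j)$ attached to its own block so that the quadratic scaling of the Gr\"{u}ss remainder is preserved summand-by-summand rather than being replaced by one global $\mu(U)^2$; this per-piece attachment, together with $M^j-m^j\le M-m$, is exactly what makes $e_1(N)$ smaller under refinement.

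Finally, for the sharpness of $1/4$ I would transport the extremal analysis of Theorem~\ref{th:th1} to each block: on $U_j$ the bound is saturated by the signum profile $f_i(s)=\sgn\big(s-(a_j+b_j)/2\big)$, with $a_j=\inf\{s\in U_j\}$ and $b_j=\sup\{s\in U_j\}$. Gluing these piecewise-constant-sign functions across the partition saturates every summand of $e_1(N)$ simultaneously, so no constant smaller than $1/4$ can hold, which establishes the claimed sharpness.
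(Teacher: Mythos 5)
Your proposal is correct and is essentially the paper's own argument: the paper's proof of Theorem~\ref{th:th1r} is the one-line remark that it is ``similar as for Theorem~\ref{th:th1}'', i.e.\ exactly your scheme of applying Theorem~\ref{th:th1} (equivalently the Gr\"{u}ss bound) fragment-by-fragment, using $\mathcal{S}(U)=\sum_j\mathcal{S}_j(U)$ to add the local gaps, and keeping each factor $\mu(U_j)^2$ attached to its own block. Your sharpness argument --- gluing the signum profile centered at each fragment's midpoint so every summand is saturated simultaneously --- is likewise the intended justification of the claim that $1/4$ remains sharp.
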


\begin{proof}
  The proof is similar as for Theorem \ref{th:th1}.
\end{proof}

\begin{corollary}\label{cor:cor2r}
   Let us assume that the compact and connected set $U\subset\mathbb{R}$ is partitioned in $N$ disjoint parts as described above. Given a continuous function $f\in\mathcal{L}_c(U,\mathbb{R}^n)$ which is differentiable almost everywhere and the convex function $\phi(z)=z^Tz$; then the Jensen's gap is bounded from above by:
   \begin{equation*}
   \mathcal{S}(U)-\sum_{i=1}^N\mathcal{J}_i\le\ e_2(N)=\frac{1}{4}\sum_{j=1}^N\mu(U_j)^4\phi\left(\sup_{U_j}|f^\prime|\right)
   \end{equation*}
   where $\sup_{U_j}\{|f^\prime|\}=\col_i[\sup_{U_j}\{|f^\prime_i|\}]$ (in a weak sense).
\end{corollary}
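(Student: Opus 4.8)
The plan is to treat this corollary exactly as the differentiable refinement of Theorem \ref{th:th1r}, mirroring the way Corollary \ref{cor:cor1} refines Theorem \ref{th:th1}. I would take as a starting point the bound already established in Theorem \ref{th:th1r},
$$\mathcal{S}(U)-\sum_{i=1}^N\mathcal{J}_i\le\frac{1}{4}\sum_{j=1}^N\mu(U_j)^2\phi(M^j-m^j),$$
which holds without any smoothness assumption. The only thing left to do is to replace, on each block $U_j$, the oscillation $M^j-m^j$ of $f$ by a bound involving the derivative, and then substitute.

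Next I would reuse the elementary estimate underlying the proof of Corollary \ref{cor:cor1}, but applied on the subinterval $U_j$ rather than on the whole set $U$. Since $U_j$ is connected and $f$ is continuous and differentiable almost everywhere, the fundamental theorem of calculus gives, componentwise, $M_i^j-m_i^j\le\mu(U_j)\sup_{U_j}|f_i^\prime|$, the supremum being understood in the weak sense so as to cover the finitely many nonsmooth points. This is precisely the per-interval inequality of Corollary \ref{cor:cor1}, now localized to $U_j$.

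I would then push this componentwise inequality through $\phi$. Because $\phi(z)=z^Tz=\sum_i z_i^2$ and every difference $M_i^j-m_i^j$ is nonnegative, squaring and summing over $i$ preserves the inequality and yields
$$\phi(M^j-m^j)=\sum_i(M_i^j-m_i^j)^2\le\mu(U_j)^2\sum_i\left(\sup_{U_j}|f_i^\prime|\right)^2=\mu(U_j)^2\,\phi\left(\sup_{U_j}|f^\prime|\right),$$
with $\sup_{U_j}|f^\prime|=\col_i[\sup_{U_j}|f_i^\prime|]$. Substituting this into the bound of Theorem \ref{th:th1r} produces exactly $\frac{1}{4}\sum_{j=1}^N\mu(U_j)^4\phi(\sup_{U_j}|f^\prime|)=e_2(N)$, which completes the argument.

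The computation is routine; the only delicate point is the localized estimate $M_i^j-m_i^j\le\mu(U_j)\sup_{U_j}|f_i^\prime|$. Making it rigorous requires that each $U_j$ be connected, so that any two of its points can be joined inside $U_j$ and the increment of $f_i$ is controlled by the integral of $f_i^\prime$, and that the almost-everywhere derivative be interpreted weakly at the nonsmooth points, as already flagged in Corollary \ref{cor:cor1}. Since the partition is assumed to respect connectedness, this estimate transfers verbatim from $U$ to each block $U_j$ and I expect no further obstacle.
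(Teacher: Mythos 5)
Your proof is correct and follows exactly the route the paper intends: the corollary is the per-fragment combination of Theorem \ref{th:th1r} with the derivative estimate from the proof of Corollary \ref{cor:cor1}, namely $M_i^j-m_i^j\le\mu(U_j)\sup_{U_j}|f_i^\prime|$ substituted into $\frac{1}{4}\sum_j\mu(U_j)^2\phi(M^j-m^j)$. The paper leaves this proof implicit precisely because it is this immediate substitution, so there is nothing to add beyond your (correctly flagged) caveat that the weak-derivative interpretation carries over to each connected block $U_j$.
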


\begin{remark}
  More general versions can also be defined using $\phi_Q$ instead of $\phi_I$. Fragment-dependent $\phi_{Q_j}$ can also be considered; see e.g. \cite{Gouaisbaut:06, Han:08, Briat:08phd}. The analysis follows the same lines and the results are qualitatively identical, this is thus omitted.
\end{remark}

\subsection{Equidistant fragmentation}\label{sec:eqdsdqsd}

Let us consider the most simple case where the Lebesgue measure is considered together with a fragmentation of $U$ in $N$ parts of identical measure. Then the following corollary can be derived.
\begin{corollary}\label{cor:cor2s}
  Assume that $f\in\mathcal{L}_c$ satisfies the assumptions of the previous results. Fragmenting the set $U$ in $N$ parts of identical Lebesgue measure yields the following bound for the Jensen's gap:
  \begin{equation}\label{eq:e1}
    e_1(N)=\frac{\mu(U)^2}{4N}\phi(\theta)
\end{equation}
where $\theta=\col_i(\theta_i)$ and $\theta_i=\max_j\{M_i^j-m_i^j\}$. Moreover, when a continuous function $f$ differentiable almost everywhere is considered, we get the bound:
\begin{equation}\label{eq:e2}
    e_2(N)=\frac{\mu(U)^4}{4N^3}\phi(\eta).
\end{equation}
where $\eta=\max_j\left\{\sup_{U_j}|f^\prime|\right\}$.
\end{corollary}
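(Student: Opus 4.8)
The plan is to specialize the general fragmentation bounds already established in Theorem~\ref{th:th1r} and Corollary~\ref{cor:cor2r} to the equidistant case, in which every fragment carries the same Lebesgue measure $\mu(U_j)=\mu(U)/N$, $j=1,\ldots,N$ (this follows from $\sum_j\mu(U_j)=\mu(U)$ together with the identical-measure hypothesis). Substituting this common value into the bound $\frac{1}{4}\sum_{j=1}^N\mu(U_j)^2\phi(M^j-m^j)$ of Theorem~\ref{th:th1r} factors out the geometric part and yields $\frac{\mu(U)^2}{4N^2}\sum_{j=1}^N\phi(M^j-m^j)$; likewise the fourth-power weights of Corollary~\ref{cor:cor2r} collapse to $\frac{\mu(U)^4}{4N^4}\sum_{j=1}^N\phi(\sup_{U_j}|f^\prime|)$.

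First I would record the monotonicity property of $\phi$ that drives the estimate. Since $\phi(z)=z^Tz=\sum_i z_i^2$ is componentwise nondecreasing in each $|z_i|$, and since each argument appearing here has nonnegative entries (the gaps satisfy $M_i^j-m_i^j\ge0$ and the suprema satisfy $\sup_{U_j}|f_i^\prime|\ge0$), dominating the argument componentwise dominates $\phi$. By the very definition $\theta_i=\max_j\{M_i^j-m_i^j\}$ one has $0\le M_i^j-m_i^j\le\theta_i$ for every $j$, whence $\phi(M^j-m^j)\le\phi(\theta)$; analogously $\sup_{U_j}|f_i^\prime|\le\eta_i$ for all $j$ gives $\phi(\sup_{U_j}|f^\prime|)\le\phi(\eta)$.

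Then I would simply replace each of the $N$ summands by its common upper bound, obtaining $\sum_{j=1}^N\phi(M^j-m^j)\le N\phi(\theta)$ and $\sum_{j=1}^N\phi(\sup_{U_j}|f^\prime|)\le N\phi(\eta)$. Inserting these into the two factored expressions above produces $\frac{\mu(U)^2}{4N}\phi(\theta)$ and $\frac{\mu(U)^4}{4N^3}\phi(\eta)$, which are exactly the claimed bounds (\ref{eq:e1}) and (\ref{eq:e2}).

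There is no genuine obstacle here: the statement is a direct corollary obtained by equidistributing the measure and coarsening the per-fragment gaps to their worst case over the fragments. The only point worth stating carefully is the sign/monotonicity argument of the second step, because $\phi(\theta)$ and $\phi(\eta)$ involve the componentwise maxima $\theta_i,\eta_i$ rather than a scalar maximum; it is precisely the nonnegativity of the entries that makes the componentwise domination of $\phi$ automatic and legitimizes the replacement.
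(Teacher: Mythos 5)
Your proposal is correct and is precisely the derivation the paper intends (it states the corollary without proof as a direct specialization): substituting the equidistant measure $\mu(U_j)=\mu(U)/N$ into Theorem~\ref{th:th1r} and Corollary~\ref{cor:cor2r}, then coarsening each per-fragment term to the componentwise worst case via $\phi(M^j-m^j)\le\phi(\theta)$ and $\phi(\sup_{U_j}|f^\prime|)\le\phi(\eta)$, which turns the factor $N^{-2}$ into $N^{-1}$ and $N^{-4}$ into $N^{-3}$. Your care in justifying the componentwise domination of $\phi$ through nonnegativity of the entries is exactly the right (and only) point needing attention.
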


The following proposition provides the result on the sublinear convergence of the gap upper bound when the fragmentation order increases:
\begin{proposition}
  When no continuity assumption is made on the function $f\in\mathcal{L}_c$, the upper bound $e_1(N)$ on the gap satisfies
  \begin{equation*}
    \begin{array}{lcl}
      e_1(N)&=&\left(1-N^{-1}\right)e_1(N-1),\ N>1\\
      e_1(1)&=&\frac{\mu(U)^2}{4}\phi(\theta).
    \end{array}
  \end{equation*}

  When the continuous function $f\in\mathcal{L}_c$ is differentiable almost everywhere, the upper bound $e_2(N)$ on the gap obeys
  \begin{equation*}
    \begin{array}{lcl}
      e_2(N)&=&\left(1+\frac{-3N^2+3N-1}{N^3}\right)e_2(N-1),\ N>1\\
      &\sim&\left(1-3N^{-1}\right)e_2(N-1)\ \mathrm{when\ }N\to+\infty\\
      e_2(1)&=&\frac{\mu(U)^4}{4}\phi(\eta).
    \end{array}
  \end{equation*}
  \end{proposition}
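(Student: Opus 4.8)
The plan is to read off both recurrences directly from the closed-form bounds established in Corollary \ref{cor:cor2s}, since the proposition is purely a restatement of those expressions in recursive form. Recall that under the equidistant fragmentation we have $e_1(N)=\frac{\mu(U)^2}{4N}\phi(\theta)$ and $e_2(N)=\frac{\mu(U)^4}{4N^3}\phi(\eta)$, where the quantities $\theta$ and $\eta$ are fixed and do not depend on $N$. The two base cases $e_1(1)$ and $e_2(1)$ then follow immediately by setting $N=1$ in these formulas, reproducing $\frac{\mu(U)^2}{4}\phi(\theta)$ and $\frac{\mu(U)^4}{4}\phi(\eta)$ respectively.

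For the first recurrence I would form the quotient $e_1(N)/e_1(N-1)$. Since the common factor $\mu(U)^2\phi(\theta)/4$ cancels, this reduces to $(N-1)/N=1-N^{-1}$, which is exactly the claimed relation $e_1(N)=(1-N^{-1})e_1(N-1)$ valid for $N>1$.

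For the second recurrence the same cancellation yields $e_2(N)/e_2(N-1)=(N-1)^3/N^3$. Expanding the numerator via the binomial theorem, $(N-1)^3=N^3-3N^2+3N-1$, so the ratio equals $1+\frac{-3N^2+3N-1}{N^3}$, which matches the stated coefficient. The asymptotic equivalence then follows from the Taylor expansion $(1-N^{-1})^3=1-3N^{-1}+O(N^{-2})$, whose leading correction term is $-3N^{-1}$; this justifies the $\sim(1-3N^{-1})e_2(N-1)$ behaviour as $N\to+\infty$.

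There is no genuine obstacle here: once Corollary \ref{cor:cor2s} is available, the entire argument is an elementary algebraic manipulation of the two explicit bounds. The only points requiring minimal care are the binomial expansion of $(N-1)^3$ and the observation that $\theta$, $\eta$ and $\mu(U)$ are all independent of $N$, so that they factor out of every quotient and never interfere with the ratios.
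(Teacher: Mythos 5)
Your proposal is correct and is exactly the argument the paper intends: the proposition is stated without a separate proof precisely because it is an immediate algebraic consequence of the closed-form expressions $e_1(N)=\frac{\mu(U)^2}{4N}\phi(\theta)$ and $e_2(N)=\frac{\mu(U)^4}{4N^3}\phi(\eta)$ from Corollary \ref{cor:cor2s}, and your ratio computations (including the binomial expansion of $(N-1)^3$) reproduce that reasoning. Note only that your stated premise that $\theta$ is independent of $N$ is an implicit assumption the paper itself makes (strictly, $\theta_i=\max_j\{M_i^j-m_i^j\}$ varies with the partition unless one replaces it by the global oscillation $f_i^+-f_i^-$), so you are faithful to, not weaker than, the paper's own treatment.
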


  Hence, in both cases, the error tends asymptotically to 0 and the convergence is sublinear since $\lim_{N\to+\infty}\frac{e_i(N+1)}{e_i(N)}=1$, $i\in\{1,2\}$. Hence we can easily conclude that
  \begin{equation*}
    \sum_{i=1}^N\mathcal{J}_i\to\mathcal{S}(U)\ \mathrm{when}\ N\to+\infty
  \end{equation*}
  in any case.
%
%\begin{example}
%  In \cite[pages 169-171]{Briat:08phd}, the conservatism is analyzed for the function $f(s)=s$ over the compact set $U=[t-h,t]$ where $h>0$, $t\in\mathbb{R}$. It is shown that the Jensen's gap equals $\frac{h^4}{12N^3}$. Using Corollary \ref{cor:cor2s} with the property $f^\prime=1$, we get the bound $h^4/4N^3$ which is not a so bad estimation of the true gap. A refined result could be obtained by deriving a tight bound for Jensen's inequality in the special case where $f$ is differentiable.\meth
%\end{example}

\begin{example}\label{ex:2}
  Let us consider the integral
  \begin{equation*}
    J(\alpha):=-\int_0^1e^{2\alpha t}dt=\frac{1}{2\alpha}(1-e^{2\alpha})
  \end{equation*}
  for some $\alpha\in\mathbb{R}$ (by continuity we have $J(0)=-1$). Now consider the following sum of Jensen's bounds taken over each $[ih,(i+1)h]$ with $h=1/N$:
  \begin{equation*}
  \begin{array}{lcl}
        J_N(\alpha)&:=&-\frac{1}{h}\sum_{i=0}^{N-1}\left(\int_{ih}^{(i+1)h}e^{\alpha t}dt\right)^2\\
        &=&-\frac{N(1-e^{\alpha/N})(1-e^{2\alpha})}{\alpha^2(1+e^{\alpha/N})}\\
        &\to& \frac{1}{2\alpha}(1-e^{2\alpha})\ \mathrm{as\ }N\to+\infty.
  \end{array}
  \end{equation*}
  This shows the asymptotic convergence. To see the sublinear convergence, it is enough to remark that
  $$\lim_{N\to+\infty}\frac{J_{N+1}(\alpha)-J(\alpha)}{J_{N}(\alpha)-J(\alpha)}=1.$$ %for which it is easy to see, using Taylor expansions around $h=0$ again, that the ratio tends to 1 with a sublinear convergence. It is also important to mention that when $\alpha\to0$ then $J_N(\alpha)\to -1$ and hence the bound is nonconservative even without fragmentation. This is a consequence of the fact that the Jensen's inequality is nonconservative for constant functions.

  In Fig. \ref{fig:rate}, we can see the evolution of the normalized bound $J_N(\alpha)/J(\alpha)$ for different values for $\alpha$. For small positive value for $\alpha<1$, the convergence is very fast since the function is slowly varying. When $\alpha$ increases the convergence becomes slower. This follows from Theorem \ref{th:th1} and Corollary \ref{cor:cor1} stating that the gap is depends quadratically on the variability of the function. In Fig. \ref{fig:lolmdr}, the different upper bounds $e_1(N)$ (\ref{eq:e1}) and $e_2(N)$ (\ref{eq:e2}) are compared to the actual gap for the case $\alpha=1$.

  \begin{figure}
  \begin{center}
    \includegraphics[width=0.35\textwidth]{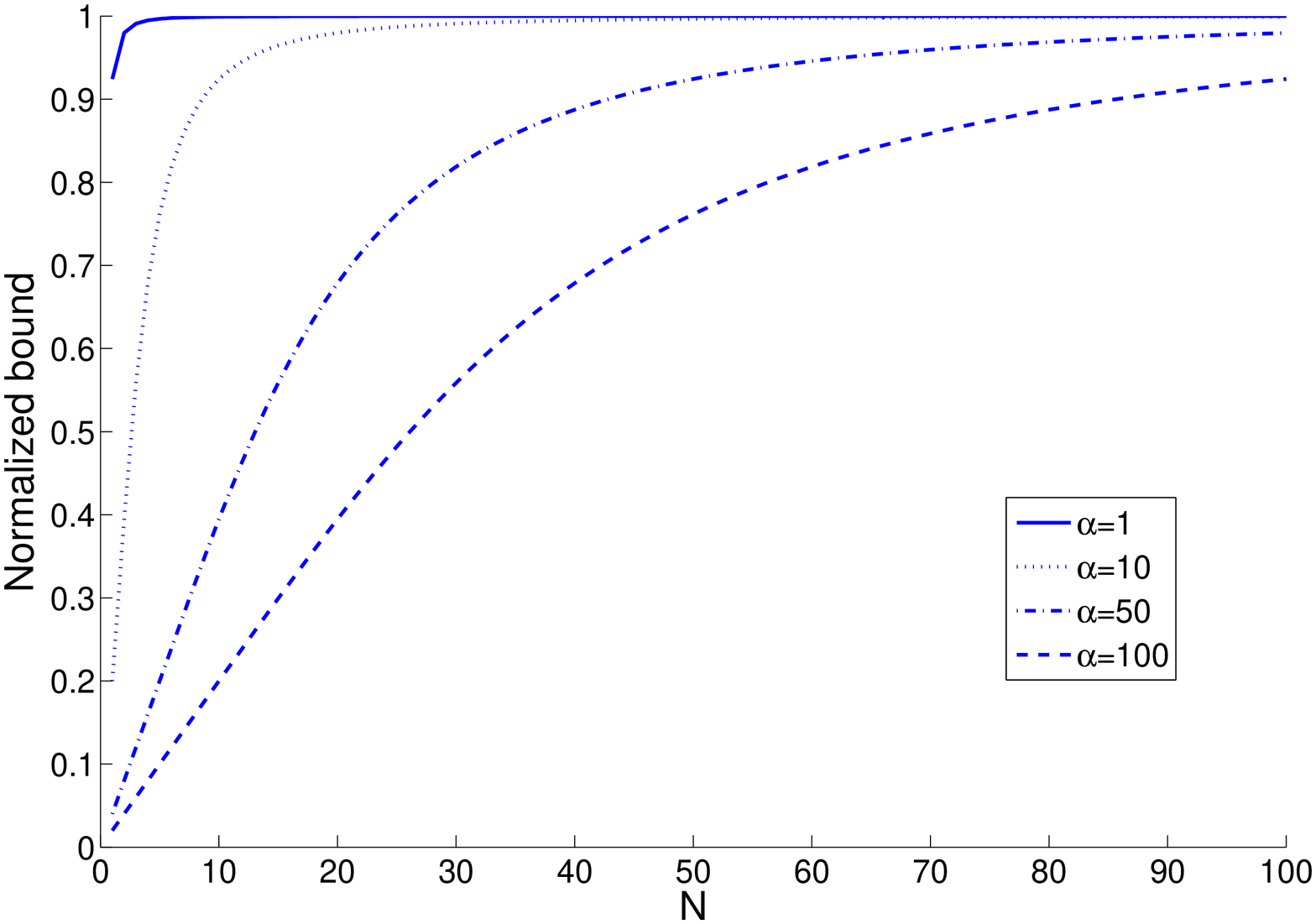}\hfill\includegraphics[width=0.35\textwidth]{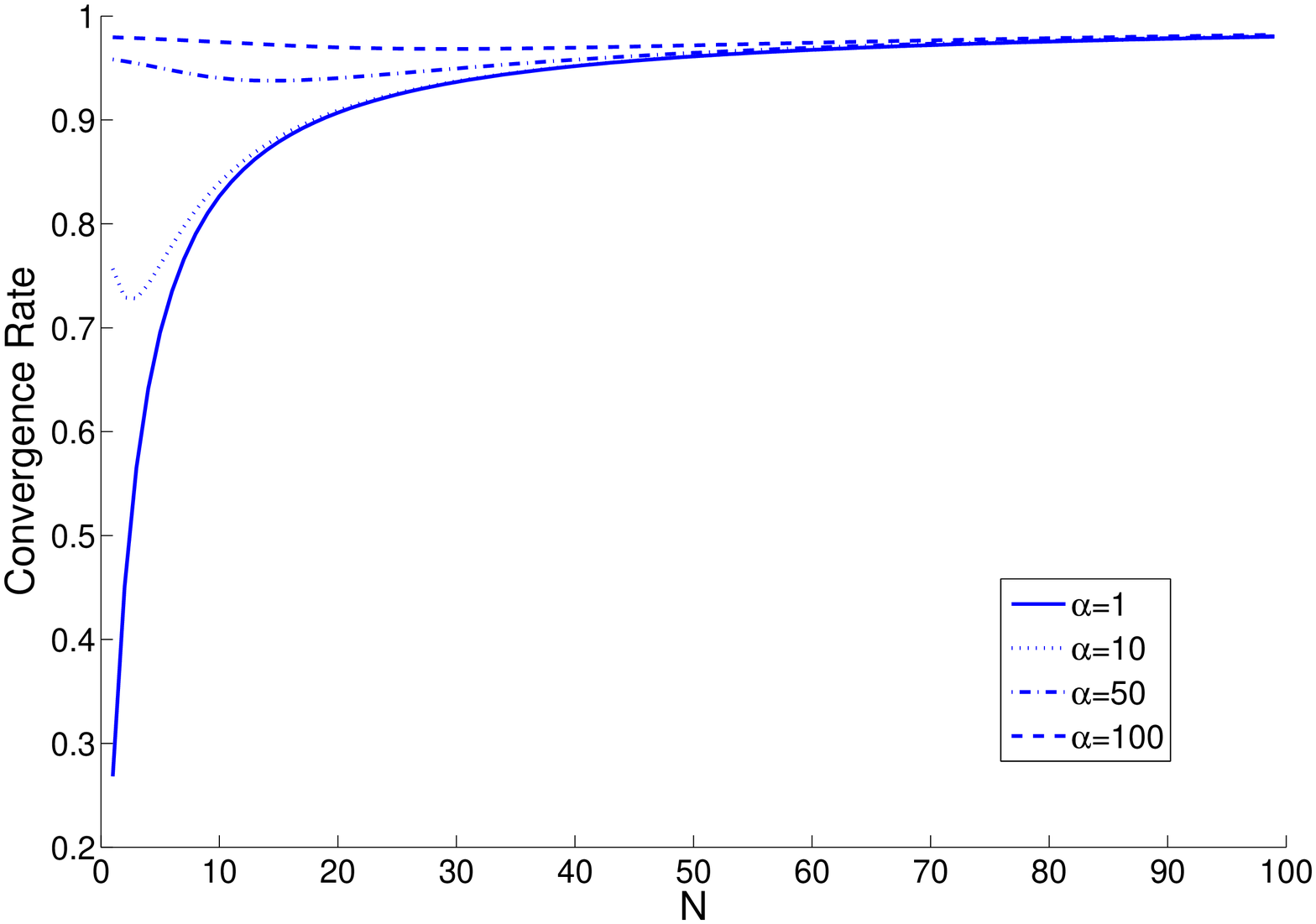}
    \caption{Evolution of the normalized bound $J_N(\alpha)/J(\alpha)$ (top) and the ratio $J_{N+1}(\alpha)/J_N(\alpha)$ (bottom) for different values for $\alpha$}\label{fig:rate}
    \end{center}
  \end{figure}
    \begin{figure}
  \begin{center}
    \includegraphics[width=0.35\textwidth]{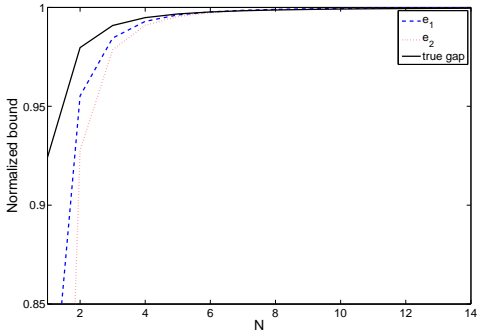}
    \caption{Evolution of the normalized bound $J_N(1)/J(1)$ and comparison with upper bounds $e_1(N)$ (\ref{eq:e1}) and $e_2(N)$ (\ref{eq:e2})}\label{fig:lolmdr}
    \end{center}
  \end{figure}
  \meth
\end{example}

\subsection{Nonuniform fragmentation}

The second conclusion, difficult to consider when analyzing time-delay systems, concerns the fact that an adaptive fragmentation scheme could improve the efficiency of the method. Indeed, defining fragments whose measure is inversely proportional to the variability of the function should reduce the gap more efficiently than the naive equidistant fragmentation. This is an immediate consequence of the fact that Jensen's inequality is an equality for the set of constant functions $f$ (i.e. $f^-=f^+$). %This adaptive fragmentation procedure depending on the slope of the function is very similar to the gauge concept at the root of the nice Henstock-Kurzweil integral (gauge integral); see e.g. \cite{Kurtz:04} and references therein.

\begin{example}
  Let us illustrate the above discussion by considering the critical function:
  $$f(s)=\sgn\left(s-(a+b)/2\right),\quad s\in[a,b]$$
  and the Lebesgue measure $\mu$. Define also the intervals $U_1(\epsilon)=\left[a,(a+b-\epsilon)/2\right]$, $U_2(\epsilon)=\left[(a+b-\epsilon)/2,(a+b+\epsilon)/2\right]$ and $U_3(\epsilon)=\left[(a+b+\epsilon)/2, b\right]$ for some $\epsilon\in(0,a+b)$. It is clear that $\bigcup_{k=1}^3U_k(\epsilon)=[a,b]$ and $\mu(U_i(\epsilon)\cap U_j(\epsilon))=0$, for all $i,j\in\{1,2,3\}$, $i\ne j$ and for any $\epsilon\in(0,a+b)$. Since the function $f(s)$ is constant over $U_1(\epsilon)$ and $U_2(\epsilon)$, for any $\epsilon\in(0,a+b)$, then the Jensen's inequality is exact and does not introduce any conservatism. All the conservatism is concentrated on the interval $U_2(\epsilon)$ where lies the discontinuity. Finally, using Theorem \ref{th:th1}, the exact gap on this interval is
  \begin{equation*}
    \frac{\mu(U_2(\epsilon))^2}{4}\phi(f^+-f^-)=\epsilon^2.
  \end{equation*}
  Thus the gap can be reduced to an arbitrary small value by choosing adequately the set $U_2(\epsilon)$.\meth
\end{example}

It is important to note that, when a uniform fragmentation scheme is used on the above discontinuous function, the gap does not converge monotonically. Indeed, by increasing $N$, the measure of the interval where lies the discontinuity can be locally increasing. The non-monotonic gap is however bounded from above by the monotonic bounds $e_1$ and $e_2$ derived in Section \ref{sec:eqdsdqsd}.

\begin{example}
  Let us consider the function of Example \ref{ex:2}. The idea is to use a nonuniform fragmentation to speed up the convergence. Since the slope of the function increases, then it seems natural to fragment the interval $[0,1]$ in such a way that the measure of the fragments decreases as we approach 1. We thus consider the following delimitating sequence of points $t_i:=(1-\varepsilon^{\frac{i-1}{N}})(1-\varepsilon)^{-1}$ where $\varepsilon\in(0,1)$ is a small positive scalar and $i=1,\ldots,N+1$. Obviously we have $t_1=0$, $t_{N+1}=1$ and the interval $[0,1]$ is nonuniformly partitioned in $N$ parts $U_i:=[t_i,t_{i+1}]$, $i=1,\ldots,N$. The Lebesgue measure of the interval $U_{i}$ satisfies $\mu(U_i)=\varepsilon^{\frac{i}{N}}\kappa_0$
  where $\kappa_0=\frac{\varepsilon^{-\frac{1}{N}}-1}{1-\varepsilon}$. Choosing $\varepsilon=10^{-4}$ and considering the exponential function of Example \ref{ex:2} with $\alpha=100$, we obtain the result depicted on Fig. \ref{fig:2} where we can see that the convergence speed has been increased quite spectacularly.
  \begin{figure}
    \begin{center}
     \includegraphics[width=0.35\textwidth]{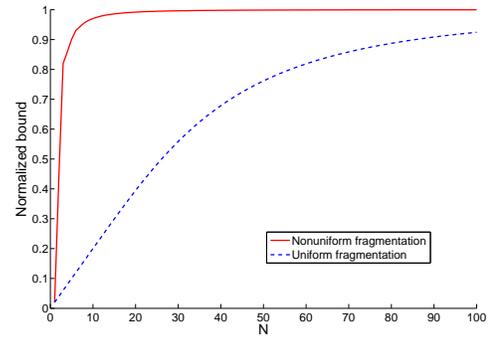}
     \caption{Comparison of normalized bounds with uniform and nonuniform fragmentation for the exponential function with $\alpha=100$ and $\varepsilon=10^{-4}$}\label{fig:2}.
    \end{center}
  \end{figure}\meth
\end{example}

Unfortunately, despite of being very efficient for the exponential function, this is not of real interest for the analysis of time-delay and sampled-data systems since the trajectories of the system are not known a priori. To explain this, let us consider the time delay system $\dot{x}(t)=-3/2\cdot x(t-1)$ with functional initial condition $x(\theta)=1,\ \theta\in[-1,0]$ whose unique solution is oscillating and exponentially stable. Choose two different time instants $t_1<t_2$ and introduce the intervals $I_1:=[t_1-1,t_1]$, $I_2:=[t_2-1, t_2]$. Assuming that the exact solution of the system is known, then an adapted nonuniform partitioning of $I_1$ can be constructed. However, this partitioning fails almost surely to be a good one for the interval $I_2$ due to the oscillating behavior of the solution. This shows that even when the solution is known, it is, in general, not possible to find a good nonuniform fragmentation common to any interval of integration. Hence, it is natural to choose a \emph{uniform fragmentation which is the best tradeoff between all the non-uniform fragmentation schemes}.
%
%To conclude, the natural choice of uniform fragmentation is the only good one when the solutions are unknown or have a non-monotonous behavior (e.g. oscillating). However, when the solutions have monotonous derivatives (e.g. real valued exponentials), then nonuniform fragmentation schemes can be used very efficiently.
%
%\subsection{Discrete-time case}\label{sec:2.2}
%
%For the discrete case, the following sums are considered:
%\begin{equation*}
%  \mathcal{S}(U):=\sum_{s\in U}[\phi\circ f](s)\mu(s)
%\end{equation*}
%and
%\begin{equation*}
%  \mathcal{S}_i(U):=\sum_{s\in U_i}[\phi\circ f](s)\mu(s),\ i=1,\ldots,N
%\end{equation*}
%where the sets $U_i$'s satisfy $\bigcup_{k=1}^NU_k=U$ and $\mu(U_j\cap U_i)=0$ for all $i,j=1,\ldots,N$, $i\ne j$. We get the following result:
%%
%\begin{theorem}
%    Assume that the compact and connected set $U\subset\mathbb{R}$ is partitioned in $N$ disjoint parts. Let us consider the convex function $\phi(z)=z^Tz$ and a sequence $f\in\mathcal{L}_d$. In such a case, the Jensen's gap is bounded from above by:
%  \begin{equation*}
%    \frac{1}{4}\sum_{j=1}^N\mu(U_j)^2\phi(M^j-m^j)
%  \end{equation*}
%  where $m_j=\col_i[m_i^j]$, $m_i^j:=\inf_{U_j}\{f_i\}$, $M_j=\col_i[M_i^j]$, $M_i^j:=\sup_{U_j}\{f_i\}$.
%\end{theorem}
%%
%Identical conclusions as in the continuous case can be drawn, they are thus omitted.

\section{Equivalence between Jensen's bound and some bounds of the literature}\label{sec:3}

The goal of this section is to derive a complete family of bounds equivalent to Jensen's in terms of tightness but more complex from a computational point of view. Despite of their slight higher computational cost, this family has the nice properties of being affine in the measure of the interval of integration and leading to LMIs that remain well-posed when the measure of the interval of integration tends to 0. This is very convenient when this quantity is a (time-varying or uncertain \cite{Naghshtabrizi:08,Seuret:09b}) data of the problem\footnote{This is, for instance, the case when time-delay systems are studied. In such a case, the length of the interval of integration coincides with the delay itself.}. The latter feature is due to the convexity (affine) of the affine bound w.r.t. the measure of the interval integration. This is very interesting when LMI-based results are desired as it will be illustrated in Section \ref{sec:motivationlol}. The upcoming results can be used to motivate the use of such affine bounds which are not worse than Jensen's in terms of tightness. It is finally shown that several bounds devised in the literature are elements of this general family.

The results of this section rely on the following lemma:
\begin{lemma}[\cite{Briat:08phd}]\label{lem:equiv}
Given matrices $C=C^T\succ0$, $A=A^T$ and $B$, the following statements are equivalent:
  \begin{enumerate}
    \item The matrix inequality
    \begin{equation}\label{eq:loil}
      \mathcal{M}_1:=A-B^TC^{-1}B\prec0
    \end{equation}
    holds.
    \item There exists a matrix $N$ such that the matrix inequality
    \begin{equation}\label{eq:nl}
      \mathcal{M}_2(N):=A+N^TB+B^TN+N^TCN\prec0
    \end{equation}
    holds.
    \item The statement
    \begin{equation*}
      \inf_N\left\{N^TB+B^TN+N^TCN\right\}=-B^TC^{-1}B.
    \end{equation*}
     holds true in the partially ordered space of symmetric matrices with partial order '$\prec$'. Moreover the global minimizer $N^*$ is unique and is given by $N^*=-C^{-1}B$.
   % \item There exists a unique matrix $N$ such that the matrix equality
%    \begin{equation}
%      B^TC^{-1}B+N^TB+B^TN+N^TCN=0
%    \end{equation}
%    holds. Moreover, the explicit solution is $N=-C^{-1}B$.
    %\item There exists a matrix $N$ such that the matrix inequality
%    \begin{equation}
%    \begin{bmatrix}
%      A+N^TB+B^TN & N^TC\\
%      C^TN & -C
%    \end{bmatrix}\prec0
%    \end{equation}
%    holds.
   % \item For any matrix $N$ the matrix inequalities
%    \begin{equation}\label{eq:lineq}
%      B^TC^{-1}B+N^TB+B^TN+N^TCN\succeq0
%    \end{equation}
%    \begin{equation}\label{eq:lineq2}
%      A+N^TB+B^TN+N^TCN\prec0
%    \end{equation}
%    hold. Moreover, the equality in (\ref{eq:lineq}) holds for $N=-C^{-1}B$.
  \end{enumerate}
\end{lemma}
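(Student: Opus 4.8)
The plan is to reduce all three statements to a single completion-of-squares identity for the matrix-valued map $N\mapsto N^TB+B^TN+N^TCN$. First I would introduce the candidate minimizer $N^*=-C^{-1}B$, which is well defined since $C\succ0$, and verify the algebraic identity
\begin{equation*}
  N^TB+B^TN+N^TCN=(N-N^*)^TC(N-N^*)-B^TC^{-1}B.
\end{equation*}
This follows by expanding the right-hand side and using the relations $CN^*=-B$ and $(N^*)^TCN^*=B^TC^{-1}B$. This one identity carries essentially all the content of the lemma, so establishing it cleanly is the crux of the argument.

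From the identity, statement 3 is immediate. Since $C\succ0$, the term $(N-N^*)^TC(N-N^*)$ is positive semidefinite for every $N$ and vanishes if and only if $C^{1/2}(N-N^*)=0$, i.e. $N=N^*$. Hence $N^TB+B^TN+N^TCN\succeq-B^TC^{-1}B$ for all $N$, with equality attained exactly at $N=N^*$. Because the lower bound $-B^TC^{-1}B$ is actually achieved, it is not merely a lower bound but the greatest lower bound in the L\"{o}wner order $\prec$, and the minimizer is unique.

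For the equivalence of statements 1 and 2, I would add $A$ throughout the identity and rewrite $\mathcal{M}_2(N)=\mathcal{M}_1+(N-N^*)^TC(N-N^*)$. To show 2 $\Rightarrow$ 1, suppose $\mathcal{M}_2(N)\prec0$ for some $N$; then $\mathcal{M}_1=\mathcal{M}_2(N)-(N-N^*)^TC(N-N^*)\preceq\mathcal{M}_2(N)\prec0$, since the subtracted term is positive semidefinite. To show 1 $\Rightarrow$ 2, choosing $N=N^*$ annihilates the quadratic term, giving $\mathcal{M}_2(N^*)=\mathcal{M}_1\prec0$ and thus exhibiting the required $N$. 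Together with statement 3 this closes the loop among the three assertions.

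The computation itself is routine matrix algebra, so I expect no genuine obstacle; the point needing the most care is the uniqueness claim in statement 3, where one must argue that \emph{positive definiteness} of $C$ (not merely semidefiniteness) forces $(N-N^*)^TC(N-N^*)=0$ to imply $N=N^*$ at the level of matrices, and that attainment of the lower bound upgrades it to the infimum in the partial order. Structurally the result is just a matrix completion of squares, equivalent to the Schur-complement characterization of $\begin{bmatrix} A & B^T \\ B & C\end{bmatrix}$, which is why the three statements collapse onto one another so directly.
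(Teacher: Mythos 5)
Your proposal is correct and follows essentially the same route as the paper's own proof: completing the square in $N$ to show that $\inf_N\{N^TB+B^TN+N^TCN\}=-B^TC^{-1}B$ is attained at the unique minimizer $N^*=-C^{-1}B$, and then deducing the equivalence of statements 1 and 2 from $\mathcal{M}_1=\mathcal{M}_2(N^*)$. You merely spell out the details (the explicit identity $\mathcal{M}_2(N)=\mathcal{M}_1+(N-N^*)^TC(N-N^*)$ and the positive-semidefiniteness argument for $2\Rightarrow1$) that the paper's terser proof leaves implicit.
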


\begin{proof}
  A proof is given in \cite{Briat:08phd} and is quite involved.  Here we provide an alternative one (some other proofs could rely on the elimination/projection lemma). To see the equivalence between 1) and 2) is enough to show that 3) holds. It is easy to see that (\ref{eq:nl}) is convex in $N$ since $R\succ0$. Hence completing the squares, we find that the minimum $-B^TC^{-1}B$ is attained for $N^*=-C^{-1}B$. Thus, for any triplet $(A,B,C)$ satisfying the assumptions, we can always find $N^*$ such that $$\mathcal{M}_1=\mathcal{M}_2(N^*)=\inf_N\left\{\mathcal{M}_2(N)\right\}.$$ This concludes the proof.
\end{proof}

%%
%Since (\ref{eq:loil}) and (\ref{eq:nl}) are equivalent this means that . The following
%%
%\begin{proposition}\label{prop}
%The results of the above Lemma are also related of the following identity:
%\begin{equation}
%      \inf_N\left\{N^TB+B^TN+N^TCN\right\}=-B^TC^{-1}B.
%    \end{equation}
%    Moreover the minimizer is unique and is given by $N^*=-C^{-1}B$.
%\end{proposition}
%\begin{proof}
%  The proof can be done completing the squares or simply noting that the minimization problem is convex.
%\end{proof}
%
%\begin{remark}
%  It is important to note that all the above results are also valid for parameter dependent matrices. The extension is straightforward and is omitted.
%\end{remark}
%%
%

The interest of the above result is twofold: it can be used to transform complex nonlinear matrix inequalities \cite{Moon:01, Briat:08c, Peaucelle:05} in a more convenient form \cite{Briat:08c}; or, what is of interest here, to prove equivalence between different results.
%
%In , this result is used to transform a nonlinear matrix inequality (of the form (\ref{eq:loil})) involving inverse of matrices into an equivalent bilinear matrix inequality, of the form (\ref{eq:nl}), easier to solve than the strongly nonlinear original matrix inequality. Note that nonlinear matrix inequalities of the form (\ref{eq:loil}) are often encountered in design problems  and cannot be exactly considered since the product term is concave. In \cite{Moon:01,Peaucelle:05}, a nonlinear problem of the form (\ref{eq:loil}) is solved directly using the cone complementary algorithm \cite{Ghaoui:97}. However, this algorithm cannot be applied when the matrices are parameter dependent, at the inverse of Lemma \ref{lem:equiv}.
%
%We will focus now on the use of Lemma \ref{lem:equiv} to show equivalence between different bounds. Indeed, it can be used to characterize a complete family of bounds equivalent to Jensen's.
This is stated in the following theorem:
\begin{theorem}\label{th:mainthc}
  Let us consider a vector function $z(\cdot)$ integrable over $U$, with Lebesgue measure $\mu(U)$, a real matrix $R=R^T\succ0$, and a vector function $w(\cdot)$ verifying  $\int_Uz(s)ds=Mw(\cdot)$, for some known matrix $M$. Then the following statements are equivalent:
  \begin{enumerate}
    \item The following inequality
     \begin{equation*}
    -\int_Uz(s)^TRz(s)ds\le-\mu(U)^{-1}w^TM^TRMw
  \end{equation*}
  holds for all $w(\cdot),z(\cdot)$ satisfying the above assumptions.
  \item There exists a matrix $N$ such that the inequality
  \begin{equation*}
  -\int_Uz(s)^TRz(s)ds\le w^T\mathcal{Q}(N)w
\end{equation*}
holds for all $w(\cdot),z(\cdot)$ satisfying the above assumptions and where $$\mathcal{Q}(N)=N^TM+M^TN+\mu(U)N^TR^{-1}N.$$
\end{enumerate}
\end{theorem}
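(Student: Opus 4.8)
The plan is to recast both inequalities as scalar quadratic forms in $w$ and then invoke Lemma~\ref{lem:equiv} with the identification $C:=\mu(U)R^{-1}$, $B:=M$ (and $A:=0$). First I would check that the hypotheses of the lemma are met: $C=C^T\succ0$ because $R\succ0$ and $\mu(U)>0$. A direct substitution then shows that the matrix of statement~2, $\mathcal{Q}(N)=N^TM+M^TN+\mu(U)N^TR^{-1}N$, is exactly $\mathcal{M}_2(N)=A+N^TB+B^TN+N^TCN$ with $A=0$, while the matrix appearing in statement~1, namely $-\mu(U)^{-1}M^TRM$, equals $-B^TC^{-1}B$ since $C^{-1}=\mu(U)^{-1}R$. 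Thus the object in statement~1 is the minimand $\mathcal{M}_1$ of the lemma and the object in statement~2 is its slack-variable counterpart.

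The heart of the argument is the infimum characterisation (item~3 of Lemma~\ref{lem:equiv}): over all matrices $N$ one has $\inf_N\mathcal{Q}(N)=-\mu(U)^{-1}M^TRM$, attained uniquely at $N^*=-C^{-1}B=-\mu(U)^{-1}RM$, so that $\mathcal{Q}(N)\succeq-\mu(U)^{-1}M^TRM$ for every $N$. For the direction $(1)\Rightarrow(2)$ I would simply pick $N=N^*$: then $w^T\mathcal{Q}(N^*)w=-\mu(U)^{-1}w^TM^TRMw$ and the inequality of statement~2 becomes literally that of statement~1, so the required $N$ exists.

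For the converse $(2)\Rightarrow(1)$ the subtlety is that the slack-variable bound is a priori a relaxation: since $\mathcal{Q}(N)\succeq-\mu(U)^{-1}M^TRM$, the bound $w^T\mathcal{Q}(N)w$ is never smaller than the Jensen bound, so statement~2 does not entail statement~1 by mere quantifier manipulation. I would close this gap by invoking the tightness part of Lemma~\ref{lem:equiv}: minimising the admissible right-hand side of statement~2 over $N$ recovers exactly $-\mu(U)^{-1}w^TM^TRMw$, so the two families describe one and the same tightest bound. Equivalently, one checks that both assertions in fact hold unconditionally --- statement~1 is precisely the Jensen inequality~(\ref{eq:Jensenc}) applied to the convex function $z\mapsto z^TRz$ together with $\int_Uz\,ds=Mw$, and statement~2 then follows from it and from $\mathcal{Q}(N)\succeq-\mu(U)^{-1}M^TRM$ --- which renders the two statements logically equivalent.

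I expect the main obstacle to be exactly this converse direction and the correct reading of the existential quantifier on $N$: one must make explicit that the relaxation introduced by the slack variable is \emph{lossless}, which is what the uniqueness of the minimiser $N^*$ in Lemma~\ref{lem:equiv} guarantees, rather than attempting to deduce the tighter statement from the looser one directly.
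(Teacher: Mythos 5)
Your proposal is correct and follows essentially the same route as the paper: the paper's entire proof is the single sentence that the theorem is a consequence of Lemma~\ref{lem:equiv}, and your identification $A=0$, $B=M$, $C=\mu(U)R^{-1}$ (giving $\mathcal{Q}(N)=\mathcal{M}_2(N)$, $-\mu(U)^{-1}M^TRM=-B^TC^{-1}B$, and minimizer $N^*=-\mu(U)^{-1}RM$) is exactly how that lemma is meant to be instantiated here. Your additional care with the converse direction --- observing that statement~1 is unconditionally true by Jensen's inequality, so that the losslessness of the slack-variable relaxation (item~3 of the lemma) is what carries the real content of the ``equivalence'' --- is a more explicit account of a point the paper leaves implicit, not a different argument.
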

\begin{proof}
  The proof is a consequence of Lemma \ref{lem:equiv}.
\end{proof}
\begin{remark}
  A discrete-time formulation can be obtained in the same way. This is omitted due to space limitations.
\end{remark}
%
%\begin{theorem}\label{eq:mainthd}
%%  The discrete version of the theorem is obtained by replacing 'functions', 'integrable', 'Lebesgue measure' and '$\int_U\cdot ds$' by 'sequences', 'summable', 'counting measure' and '$\sum_{s\in U}$' respectively.
%  Let us consider a vector sequence $z(\cdot)$ summable over $U$, with counting measure $\mu(U)$, a real matrix $R=R^T\succ0$, and a vector sequence $w(\cdot)$ verifying $\sum_{s\in U}z(s)=Mw(\cdot)$, for some known matrix $M$. Then the following statements are equivalent:
%  \begin{enumerate}
%    \item The following inequality
%     \begin{equation*}
%    -\sum_{s\in U}z(s)^TRz(s)\le-\mu(U)^{-1}w(\cdot)^TM^TRMw(\cdot)
%  \end{equation*}
%  holds for all $w,z$ satisfying the above assumptions.
%  %
%  \item There exists a matrix $N$ such that the inequality
%  \begin{equation*}
%  -\sum_{s\in U}z(s)^TRz(s)\le w(\cdot)^T\mathcal{Q}(N)w(\cdot)
%\end{equation*}
%holds for all $w(\cdot),z(\cdot)$ satisfying the above assumptions and where $$\mathcal{Q}(N)=N^TM+M^TN+\mu(U)N^TR^{-1}N.$$
%\end{enumerate}
%\end{theorem}
%\begin{proof}
%  The proof is a simple application of Lemma \ref{lem:equiv}.
%\end{proof}

In the sequel, we will apply Theorem \ref{th:mainthc} and its discrete-counterpart in order to show the equivalence between different results of the literature.

\subsection{A first Integral inequality}

Let us consider a differentiable function $x(t)$ verifying $\int_{t_k}^t\dot{x}(s)ds=Mw$ with $M=\begin{bmatrix}
  I & -I
\end{bmatrix}$ and $w(\cdot)=\col(x(t),x(t_k))$. In \cite{Seuret:09b}, the following bound is used:
\begin{equation}\label{eq:seurette}
  -\int_{t_k}^tz(s)^TRz(s)ds\le w^T\mathcal{R}w,\qquad R=R^T\succ0,\ t>t_k
\end{equation}
where $\mathcal{R}=N^TM+M^TN+(t-t_k)N^TR^{-1}N$ and $N$ is an additional matrix to be determined. Then according to Theorem \ref{th:mainthc}, we can conclude on the equivalence with Jensen's. However, we will see in the next example that it is sometimes better suited to use the affine formulation.

\subsection{A reason for using the affine formulation rather the rational one}\label{sec:motivationlol}

This discussion aims at illustrating the ill-posedness problem arising when the support of the integral varies in time and may vanish at some instants. The affine formulation does remain well-posed in such circumstances leading then to more appropriate numerical tools, like LMIs. In \cite{Seuret:09b}, aperiodic sampled-data systems are considered and an affine version of the Jensen's inequality is employed to provide an LMI condition \cite[Theorem 1]{Seuret:09b}. If the rational one was used, this would create a concave term in $(t-t_k)$ of the form $-(t-t_k)^{-1}M_2^TRM_2$ where $R=R^T\succ0$, $M_2\in\mathbb{R}^{n\times 2n}$, $t\in[t_k,t_{k+1}]$, $t_{k+1}-t_k\le\tau_m$, $t_k$ being the sampling instants (following the notation of \cite{Seuret:09}). This term is ill-posed when $t\to t_k$ and a way to overcome this problem consists of bounding this term by $-\tau_m^{-1}M_2^TRM_2$. We compare now this 'result' to the Theorem 1 of \cite{Seuret:09b} on the system \cite[Example 4]{Naghshtabrizi:08}, \cite[Example 1]{Seuret:09b}.
Theorem 1 of \cite{Seuret:09b}, based on the affine formulation, yields a maximal $\tau_m=1.6894$ while the 'result' based on the rational Jensen's inequality yields the lower value $\tau_m=0.8691$. Even though the bounds are initially equivalent, \emph{the desire of making the problem tractable (obtaining well-posed LMIs) introduces considerable conservatism}. This illustrates the importance of the affine version of the Jensen's inequality since we have to favor tools in calculations that lead to better numerical solutions.

\subsection{A second Integral inequality}

In \cite[equation (7)]{ZhangHan:08}, the following bound is considered:
\begin{equation}\label{eq:han}
  -\int_{t-\tau}^t\dot{x}(s)^TR\dot{x}(s)ds\le\xi(t)^T(\mathscr{M}+\tau N^TR^{-1}N)\xi(t).
\end{equation}
Simple calculations on this upper bound shows that Theorem \ref{th:mainthc} applies and that this bound is equivalent to the Jensen's inequality.
%
%where $\xi(t)=\col(x(t),\hat{x}(t),x(t-\tau),w(t))$ is a vector of signals, $N=\begin{bmatrix}
%        M_1 & M_2 & M_3 & W
%    \end{bmatrix}$ and
%\begin{equation*}
%    \mathscr{M}=\begin{bmatrix}
%    M_1+M_1^T & M_2 & -M_1^T+M_3 & W\\
%    \star & 0 & -M_2^T & 0\\
%    \star & \star & -M_3-M_3^T & -W\\
%    \star & \star & \star & 0
%  \end{bmatrix}.
%\end{equation*}
%Note also that $\int_{t-\tau}^t\dot{x}(s)ds=M\xi(t)$ with $M=\begin{bmatrix}
%  I & 0 & -I & 0
%\end{bmatrix}$.
%%
%We will show that this bound is equivalent to Jensen's. Note that $\mathscr{M}$ can be rewritten as $\mathscr{M}=N^TM+M^TN$ and hence the right-hand side of (\ref{eq:han}) writes
%  \begin{equation*}
%    \xi(t)^T(N^TM+M^TN+\tau N^TR^{-1}N)\xi(t)
%  \end{equation*}
%  and by virtue of Theorem \ref{th:mainthc} we have equivalence with Jensen's since
%  \begin{equation*}
%    \inf_N\{N^TM+M^TN+\tau N^TR^{-1}N\}=-\tau^{-1}M^TRM.
%  \end{equation*}

\subsection{A sum inequality}

In \cite[equation (7)]{Zhang:06}, a bound of the form is introduced:
\begin{equation}\label{eq:salechinois}
  -\sum_{i=k-h}^{k-1}y(i)^TRy(i)\le\xi(k)^T\left[\beth+hN^TR^{-1}N\right]\xi(k).
\end{equation}
In this case, the discrete-time version of Theorem \ref{th:mainthc} can be applied, showing then equivalence with the discrete-time version of the Jensen's inequality.
%where $\xi(k)=\col(x(k),x(k-h),w(k))$, $N=\begin{bmatrix}
%  N_1 & N_2 & N_3
%\end{bmatrix}$, $y(i)=x(i+1)-x(i)$, $\sum_{i=k-h}^{k-1}y(i)=x(k)-x(k-h)=M\xi(k)$, $M=\begin{bmatrix}
%  I & -I & 0
%\end{bmatrix}$ and
%\begin{equation*}
%  \beth:=\begin{bmatrix}
%    N_1+N_1^T & -N_1^T+N_2 & N_3\\
%    \star & -N_2-N_2^T & N_3\\
%    \star & \star & 0
%  \end{bmatrix}.
%\end{equation*}
%%
%Similarly as previously, we have $\beth=N^TM+M^TN$ and finally the right-hand side of (\ref{eq:salechinois}) becomes
%\begin{equation*}
%  \xi(k)^T\left(N^TM+M^TN+hN^TR^{-1}N\right)\xi(k)
%\end{equation*}
%and, according to Theorem \ref{eq:mainthd}, it is possible to conclude on the equivalence with the discrete version of the Jensen's inequality.

\section{Conclusion}

The conservatism of the Jensen's inequality has been analyzed using the Gr\"{u}ss inequality. Motivated by several results of the literature, a fragmentation scheme has been considered. It has been shown that the gap converges asymptotically to 0 as the order of fragmentation increases. Next, nonuniform fragmentation techniques have been introduced and their possible accelerating effect illustrated. Unfortunately, they can be applied in some very specific cases only. This showed that the best tradeoff lies in the use of uniform fragmentation schemes.

The second part of the paper has been devoted to the characterization of a family of bounds, equivalent to Jensen's in terms of tightness but with a higher computational complexity. This family defines affine bounds in the measure of interval of integration (rational and nonconvex for Jensen's) for which the obtained matrix inequalities remain well-posed when the measure of the integral of integration tends to 0. This is of crucial interest when LMIs are sought. It has been shown that several bounds devised in the literature are elements of this family.

As a final remark, this homogeneity suggests that Jensen's inequality and its companions could be the best bounds still preserving a tractable structure to the problem. This together with a (possibly adaptive) fragmentation scheme should lead to asymptotically exact well-posed approximants of integral terms, affine in the measure of the integration support.

\section*{Acknowledgments}

The author thanks the ACCESS team, the editor and the associate editor as well as the anonymous reviewers who helped to improve the quality of the paper.

\bibliographystyle{IEEEtran}
\bibliography{./global}

\end{document}